\def\H{\mathcal{H}}
\def\P{\mathcal{P}}
\def\S{\mathfrak{S}}
\def\F{\mathfrak{F}}
\def\C{\mathfrak{C}}
\def\T{\mathfrak{T}}
\newcommand{\supp}{\mathrm{supp}}
\newcommand{\rank}{\mathrm{rank}}
\newcommand{\Tr}{\mathrm{Tr}}
\newcommand{\shs}{\hspace{1pt}}
\newcounter{defin}  \newcounter{lemma}  \newcounter{theorem}
\newcounter{proposition} \newcounter{corol}  \newcounter{remark} \newcounter{example}
\newenvironment{lemma}{\par\refstepcounter{lemma}     \textbf{Lemma \thelemma.} }{\rm\par}
\newenvironment{proposition}{\par\refstepcounter{proposition}     \textbf{Proposition \theproposition.}\ }{\rm\par}
\newenvironment{remark}{\par\refstepcounter{remark}     \textbf{Remark \theremark.}}{\rm\par}
\newenvironment{example}{\par\refstepcounter{example}     \textbf{Example \theexample.}}{\rm\par}
\begin{document}

\title{How to improve the semicontinuity bounds in [Lett. Math. Phys., 113, 121 (2023)]}

\author{M.E.~Shirokov\footnote{email:msh@mi.ras.ru}\\
Steklov Mathematical Institute, Moscow, Russia}
\date{}
\maketitle
\begin{abstract}
We show how to  improve the semicontinuity bounds  in \cite{LCB} by optimizing
the proof of the basic technical lemma. In this optimization we apply the modified version of the trick used in the resent
article \cite{D++}.

The most important applications are the semicontinuity bound for the von Neumann entropy  with the energy constraint
and the semicontinuity bounds for the entanglement of formation  with the rank/energy constraint.
\end{abstract}

\section{Notation and necessary facts}\label{sec2}

Let $\mathcal{H}$ be a separable Hilbert space,
$\mathfrak{B}(\mathcal{H})$ the algebra of all bounded operators on $\mathcal{H}$ with the operator norm $\|\cdot\|$ and $\mathfrak{T}( \mathcal{H})$ the
Banach space of all trace-class
operators on $\mathcal{H}$  with the trace norm $\|\!\cdot\!\|_1$. Let
$\mathfrak{S}(\mathcal{H})$ be  the set of quantum states (positive operators
in $\mathfrak{T}(\mathcal{H})$ with unit trace) \cite{H-SCI,N&Ch,Wilde}.

Write $I_{\mathcal{H}}$ for the unit operator on a Hilbert space
$\mathcal{H}$.

Let $H$ be a positive (semi-definite)  operator on a Hilbert space $\mathcal{H}$.  For any positive operator $\rho\in\T(\H)$ we will define the quantity $\Tr H\rho$ by the rule
\begin{equation*}
\Tr H\rho=
\left\{\begin{array}{l}
        \sup_n \Tr P_n H\rho\;\; \textrm{if}\;\;  \supp\rho\subseteq {\rm cl}(\mathcal{D}(H))\\
        +\infty\;\;\textrm{otherwise}
        \end{array}\right.
\end{equation*}
where $P_n$ is the spectral projector of $H$ corresponding to the interval $[0,n]$ and ${\rm cl}(\mathcal{D}(H))$ is the closure of the domain of $H$. If
$H$ is the Hamiltonian (energy observable) of a quantum system described by the space $\H$ then
$\Tr H\rho$ is the mean energy of a state $\rho$.

We will say that a positive operator $H$ satisfies  the \emph{Gibbs condition} if
\begin{equation}\label{H-cond}
  \Tr\, e^{-\beta H}<+\infty\quad\textrm{for all}\;\,\beta>0.
\end{equation}
If this condition holds then the von Neumann entropy is continuous on the set
$$
\C_{H,E}=\left\{\rho\in\S(\H)\,\vert\,\Tr H\rho\leq E\right\}
$$
for any $E>0$ and attains the maximal value on this set at the \emph{Gibbs state}
\begin{equation}\label{Gibbs}
\gamma_H(E)\doteq e^{-\beta(E) H}/\Tr e^{-\beta(E) H},
\end{equation}
where the parameter $\beta(E)$ is determined by the equation $\Tr H e^{-\beta H}=E\Tr e^{-\beta H}$ \cite{W}.

We will use the function
\begin{equation}\label{F-def}
F_{H}(E)\doteq\sup_{\rho\in\C_{H,E}}S(\rho)=S(\gamma_H(E)).
\end{equation}
By Proposition 1 in \cite{EC} the Gibbs condition (\ref{H-cond}) is equivalent to the following asymptotic property
\begin{equation*}
  F_{H}(E)=o\shs(E)\quad\textrm{as}\quad E\rightarrow+\infty.
\end{equation*}

We will often assume that
\begin{equation}\label{star}
  E_0\doteq\inf\limits_{\|\varphi\|=1}\langle\varphi\vert H\vert\varphi\rangle=0.
\end{equation}

An important role further is plaid by the \emph{binary entropy} 
\begin{equation}\label{be}
  h_2(x)\doteq -x\ln x-(1-x)\ln (1-x),\quad x\in[0,1]. 
\end{equation}

\section{The Alicki-Fannes-Winter method in the quasi-classical settings: advanced version and its use}\label{sec3}

\subsection{New basic lemma}

In this subsection we describe a general result concerning properties of a function $f$ on a convex subset $\S_0$ of $\S(\H)$ taking values in $(-\infty,+\infty]$ and satisfying the inequalities
\begin{equation}\label{LAA-1}
  f(p\rho+(1-p)\sigma)\geq pf(\rho)+(1-p)f(\sigma)-a_f(p)
\end{equation}
and
\begin{equation}\label{LAA-2}
  f(p\rho+(1-p)\sigma)\leq pf(\rho)+(1-p)f(\sigma)+b_f(p),
\end{equation}
for all states $\rho$ and $\sigma$ in $\S_0$ and any $p\in[0,1]$, where $a_f(p)$ and $b_f(p)$ are continuous  functions on $[0,1]$ such that $a_f(0)=b_f(0)=0$.
These  inequalities can be treated, respectively, as weakened forms of concavity and convexity. We will call functions
satisfying both inequalities (\ref{LAA-1}) and (\ref{LAA-2}) \emph{locally almost affine} (breifly, \emph{LAA functions}), since for any such function $f$ the quantity
$\,\vert f(p\rho+(1-p)\sigma)-p f(\rho)-(1-p)f(\sigma)\vert \,$ tends to zero as $\,p\rightarrow 0^+$ uniformly on $\,\S_0\times\S_0$.
\smallskip

Let $\{X,\F\}$ be a measurable space and $\tilde{\omega}(x)$ a $\F$-measurable $\S(\H)$-valued function on $X$. Denote by $\P(X)$ the set of all probability measures on $X$ (more precisely, on $\{X,\F\}$). We will assume that the function $\tilde{\omega}(x)$ is integrable (in the Pettis sense \cite{P-int}) w.r.t. any measure in $\P(X)$. Consider  the set of states
\begin{equation}\label{q-set}
\mathfrak{Q}_{X,\F,\tilde{\omega}}\doteq\left\{\rho\in\S(\H)\,\left\vert\,\exists\mu_{\rho}\in\P(X):\rho=\int_X\tilde{\omega}(x)\mu_{\rho}(dx)\;\right.\right\}.
\end{equation}
We will call any measure $\mu_{\rho}$ in $\P(X)$ such that $\rho=\int_X\tilde{\omega}(x)\mu_{\rho}(dx)$ a \emph{representing measure} for a state $\rho$ in $\mathfrak{Q}_{X,\F,\tilde{\omega}}$.

We will use the total variation distance between probability measures $\mu$ and $\nu$ in $\P(X)$ defined as
\begin{equation}\label{TVD-def}
\mathrm{TV}(\mu,\nu)=\sup_{A\in\F}\vert\mu(A)-\nu(A)\vert.
\end{equation}

Concrete  examples of sets $\mathfrak{Q}_{X,\F,\tilde{\omega}}$ can be found in \cite{LCB}.

The following lemma gives semi-continuity bounds for  LAA functions on a set of quantum states having form (\ref{q-set}). It is proved by obvious modification of the Alicki-Fannes-Winter technique.

\begin{lemma}\label{g-ob} \emph{Let $\mathfrak{Q}_{X,\F,\tilde{\omega}}$ be the set defined in (\ref{q-set}) and $\S_0$ a convex subset of $\S(\H)$ with the property
}\begin{equation}\label{S-prop}
  \rho\in\S_0\cap\mathfrak{Q}_{X,\F,\tilde{\omega}}\quad \Rightarrow \quad\{\sigma\in\mathfrak{Q}_{X,\F,\tilde{\omega}}\,\vert\,\exists \varepsilon>0:\varepsilon\sigma\leq \rho\}\subseteq\S_0.
\end{equation}

\emph{Let $f$ be a function  on the set $\,\S_0$ taking values in $(-\infty,+\infty]$ that satisfies inequalities (\ref{LAA-1}) and (\ref{LAA-2}) with possible
value $+\infty$ in both sides. Let $\rho$ and $\sigma$ be states in $\,\mathfrak{Q}_{X,\F,\shs\tilde{\omega}}\cap\S_0\,$ with representing measures $\mu_{\rho}$ and $\mu_{\sigma}$ correspondingly. If $\,f(\rho)<+\infty\,$ then
\begin{equation}\label{AFW-1+}
f(\rho)-f(\sigma)\leq \varepsilon C_f(\rho,\sigma\shs\vert\shs\varepsilon)+a_f(\varepsilon)+b_f(\varepsilon),
\end{equation}
where $\varepsilon=\mathrm{TV}(\mu_{\rho},\mu_{\sigma})$,
\begin{equation}\label{C-f}
C_f(\rho,\sigma\shs\vert\shs\varepsilon)\doteq\sup\left\{f(\varrho)-f(\varsigma)\left\vert\, \varrho,\varsigma\in\mathfrak{Q}_{X,\F,\shs\tilde{\omega}},\; \varepsilon\varrho\leq\rho,\; \varepsilon\varsigma\leq\sigma\right.\right\}
\end{equation}
and the left hand side of (\ref{AFW-1+}) may be equal to $-\infty$.}

\emph{If the function $f$ is nonnegative then inequality (\ref{AFW-1+}) holds with $C_f(\rho,\sigma\shs\vert\shs\varepsilon)$ replaced by}
\begin{equation}\label{C-f+}
C^+_f(\rho\shs\vert\shs\varepsilon)\doteq\sup\left\{f(\varrho)\,\left\vert\, \varrho\in\mathfrak{Q}_{X,\F,\tilde{\omega}},\; \varepsilon\varrho\leq\rho\right.\right\}.
\end{equation}
\end{lemma}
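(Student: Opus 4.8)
The plan is to run the Alicki--Fannes--Winter (AFW) coupling argument, adapted to the probabilistic/representing-measure setting, and to use the new trick from \cite{D++} to avoid the usual factor-of-two losses. Let $\varepsilon=\mathrm{TV}(\mu_\rho,\mu_\sigma)$. The first step is to exploit the standard coupling interpretation of total variation distance: there exist probability measures $\mu_0,\nu_\rho,\nu_\sigma\in\P(X)$ and a number $\varepsilon$ such that one can write the decompositions
\begin{equation*}
\mu_\rho=(1-\varepsilon)\mu_0+\varepsilon\nu_\rho,\qquad \mu_\sigma=(1-\varepsilon)\mu_0+\varepsilon\nu_\sigma,
\end{equation*}
where $\mu_0$ is (proportional to) the common minorant $\min(\mu_\rho,\mu_\sigma)$ of the two measures. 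Integrating $\tilde\omega$ against these measures produces states
$\omega_0=\int\tilde\omega\,d\mu_0$, $\varrho=\int\tilde\omega\,d\nu_\rho$, $\varsigma=\int\tilde\omega\,d\nu_\sigma$, all lying in $\mathfrak{Q}_{X,\F,\tilde\omega}$, and yields the two barycentric identities
\begin{equation*}
\rho=(1-\varepsilon)\omega_0+\varepsilon\varrho,\qquad \sigma=(1-\varepsilon)\omega_0+\varepsilon\varsigma.
\end{equation*}
The key point, to be checked, is that $\varepsilon\varrho\le\rho$ and $\varepsilon\varsigma\le\sigma$ (immediate from positivity of $(1-\varepsilon)\omega_0$), so that $\varrho$ and $\varsigma$ are admissible in the supremum defining $C_f(\rho,\sigma\,\vert\,\varepsilon)$; moreover hypothesis (\ref{S-prop}) guarantees that these auxiliary states, and the common state $\omega_0$, remain inside $\S_0$ where $f$ is defined and the inequalities (\ref{LAA-1})--(\ref{LAA-2}) apply.

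The second step is the algebraic heart of the argument. Applying the weakened convexity bound (\ref{LAA-2}) to $\rho=(1-\varepsilon)\omega_0+\varepsilon\varrho$ gives an upper bound on $f(\rho)$ in terms of $(1-\varepsilon)f(\omega_0)+\varepsilon f(\varrho)$ plus $b_f(\varepsilon)$; applying the weakened concavity bound (\ref{LAA-1}) to $\sigma=(1-\varepsilon)\omega_0+\varepsilon\varsigma$ gives a lower bound on $f(\sigma)$ in terms of $(1-\varepsilon)f(\omega_0)+\varepsilon f(\varsigma)$ minus $a_f(\varepsilon)$. Subtracting, the crucial observation is that the common term $(1-\varepsilon)f(\omega_0)$ cancels exactly, leaving
\begin{equation*}
f(\rho)-f(\sigma)\le \varepsilon\bigl(f(\varrho)-f(\varsigma)\bigr)+a_f(\varepsilon)+b_f(\varepsilon).
\end{equation*}
Bounding the bracket by its supremum $C_f(\rho,\sigma\,\vert\,\varepsilon)$ over all admissible pairs yields (\ref{AFW-1+}). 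The improvement over \cite{LCB} lies precisely in this single-shared-state decomposition from \cite{D++}: using the \emph{same} $\omega_0$ on both sides (rather than two separate auxiliary states) is what allows the $\omega_0$ terms to cancel and removes the extraneous factor.

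The third, more delicate step is handling the possible value $+\infty$ so that all manipulations are legitimate. Since $f(\rho)<+\infty$ is assumed, the convexity bound forces $(1-\varepsilon)f(\omega_0)+\varepsilon f(\varrho)$ to be finite (as $b_f(\varepsilon)$ is finite), so both $f(\omega_0)$ and $f(\varrho)$ are finite whenever $\varepsilon>0$; the degenerate case $\varepsilon=0$ (where $\mu_\rho=\mu_\sigma$ and hence $\rho=\sigma$, so the left side is $\le 0=a_f(0)+b_f(0)$) must be treated separately. If $f(\sigma)=+\infty$ the left-hand side is $-\infty$ and there is nothing to prove, which is the meaning of the stated caveat; otherwise all four values $f(\rho),f(\sigma),f(\omega_0),f(\varrho)$ entering the cancellation are finite and the subtraction is unambiguous. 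For the final assertion, when $f\ge 0$ one simply discards the term $-\varepsilon f(\varsigma)\le 0$ in the lower bound for $f(\sigma)$ before subtracting, replacing $f(\varrho)-f(\varsigma)$ by $f(\varrho)$ and hence $C_f$ by $C_f^+(\rho\,\vert\,\varepsilon)$.

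I expect the main obstacle to be the measure-theoretic justification of the coupling decomposition in full generality: one must confirm that the Hahn/Jordan-type minorant $\mu_0$ of two probability measures on an abstract $\{X,\F\}$ exists with total mass $1-\varepsilon$, that the renormalized residual measures $\nu_\rho,\nu_\sigma$ are genuine probability measures, and that Pettis integrability of $\tilde\omega$ against every measure in $\P(X)$ (part of the standing hypothesis) transfers to these derived measures so that $\omega_0,\varrho,\varsigma$ are well-defined states in $\mathfrak{Q}_{X,\F,\tilde\omega}$. The operator inequalities $\varepsilon\varrho\le\rho$ and $\varepsilon\varsigma\le\sigma$ then follow routinely, but verifying membership in $\S_0$ via (\ref{S-prop}) and keeping track of the $+\infty$ bookkeeping are where care is needed.
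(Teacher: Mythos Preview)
Your proposal is correct and follows essentially the same route as the paper: the Jordan decomposition of $\mu_\rho-\mu_\sigma$ yields precisely your coupling $\mu_\rho=(1-\varepsilon)\mu_0+\varepsilon\nu_\rho$, $\mu_\sigma=(1-\varepsilon)\mu_0+\varepsilon\nu_\sigma$ with the shared $\mu_0=(1-\varepsilon)^{-1}\min(\mu_\rho,\mu_\sigma)$, and the rest is the same shared-state AFW cancellation. One minor slip in your $+\infty$ bookkeeping: to conclude that $f(\omega_0)$ and $f(\varrho)$ are finite from $f(\rho)<+\infty$ you must invoke the \emph{concavity}-type bound (\ref{LAA-1}) applied to $\rho$ (giving $\varepsilon f(\varrho)+(1-\varepsilon)f(\omega_0)\le f(\rho)+a_f(\varepsilon)<+\infty$), not the convexity bound (\ref{LAA-2}), which only bounds $f(\rho)$ from above.
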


\begin{proof} We may assume that $f(\sigma)<+\infty$, since otherwise (\ref{AFW-1+}) holds trivially. By the condition we have
$$
2\mathrm{TV}(\mu_{\rho},\mu_{\sigma})=[\mu_{\rho}-\mu_{\sigma}]_+(X)+[\mu_{\rho}-\mu_{\sigma}]_-(X)=2\varepsilon,
$$
where $[\mu_{\rho}-\mu_{\sigma}]_+$ and $[\mu_{\rho}-\mu_{\sigma}]_-$ are the positive and negative parts
of the measure $\mu_{\rho}-\mu_{\sigma}$ (in the sense of Jordan decomposition theorem \cite{Bil}). Since $\,\mu_{\rho}(X)=\mu_{\sigma}(X)=1$, it follows from the above equality that
$\,[\mu_{\rho}-\mu_{\sigma}]_+(X)=[\mu_{\rho}-\mu_{\sigma}]_-(X)=\varepsilon$. Hence, $\nu_\pm\doteq \varepsilon^{-1}[\mu_{\rho}-\mu_{\sigma}]_\pm\in\P(X)$.
Moreover, it is easy to show, by using the definition of $\,[\mu_{\rho}-\mu_{\sigma}]_\pm$ via the Hahn decomposition of $X$, that
\begin{equation}\label{m-r}
\varepsilon\nu_+=[\mu_{\rho}-\mu_{\sigma}]_+\leq \mu_{\rho}\quad \textrm{and} \quad \varepsilon\nu_-=[\mu_{\rho}-\mu_{\sigma}]_-\leq \mu_{\sigma}.
\end{equation}

Modifying the idea used in \cite{D++} consider the states
\begin{equation}\label{tau-s}
\tau_+=\int_X\tilde{\omega}(x)\nu_+(dx),\quad\tau_-=\int_X\tilde{\omega}(x)\nu_-(dx)\quad \textrm{and} \quad \omega_*=\int_X\tilde{\omega}(x)\mu_*(dx),
\end{equation}
where
$$
\mu_*=\frac{\mu_{\rho}-\varepsilon\nu_+}{1-\varepsilon}=\frac{\mu_{\sigma}-\varepsilon\nu_-}{1-\varepsilon}
$$
is a measure in $\P(X)$. Since the inequalities in (\ref{m-r}) imply that $\varepsilon\tau_+\leq\rho$, $\varepsilon\tau_-\leq\sigma$ and $(1-\varepsilon)\omega_*\leq\rho$, these states
belong to the set $\mathfrak{Q}_{X,\F,\tilde{\omega}}\cap\S_0$ due to condition (\ref{S-prop}).

Then we have
\begin{equation}\label{omega-star}
\rho=\varepsilon\tau_++(1-\varepsilon)\omega_*\quad \textrm{and} \quad\sigma=\varepsilon\tau_-+(1-\varepsilon)\omega_*.
\end{equation}
By applying inequalities (\ref{LAA-1}) and (\ref{LAA-2}) to the decompositions  in (\ref{omega-star}) we  obtain
$$
f(\rho)\leq\varepsilon f(\tau_+)+(1-\varepsilon)f(\omega_*)+a_f(\varepsilon)\quad \textrm{and} \quad f(\sigma)\geq\varepsilon f(\tau_-)+(1-\varepsilon)f(\omega_*)-b_f(\varepsilon).
$$
The last inequality implies the finiteness of $f(\omega_*)$ by the assumed finiteness of $f(\sigma)$. So,
these inequalities show that
$$
f(\rho)-f(\sigma)\leq\varepsilon(f(\tau_+)-f(\tau_-))+a_f(\varepsilon)+b_f(\varepsilon).
$$
Since $\varepsilon\tau_+\leq\rho$ and $\varepsilon\tau_-\leq\sigma$, this implies inequality (\ref{AFW-1+}).

The last claim of the lemma is obvious.
\end{proof}


\begin{remark}\label{g-ob-r} In general, the condition $\mathrm{TV}(\mu_{\rho},\mu_{\sigma})=\varepsilon$ in
Lemma \ref{g-ob} can not be replaced by the condition $\mathrm{TV}(\mu_{\rho},\mu_{\sigma})\leq\varepsilon$, since
the functions $\varepsilon \mapsto C_f(\rho,\sigma\shs\vert\shs\varepsilon)$ and $\varepsilon \mapsto C^+_f(\rho\shs\vert\shs\varepsilon)$ may be decreasing and, hence, special arguments are required to show that the r.h.s. of (\ref{AFW-1+}) is a nondecreasing function of $\varepsilon$.
\end{remark}\smallskip

\begin{remark}\label{S-prop-r} Condition (\ref{S-prop}) in Lemma \ref{g-ob} can be replaced by the condition
\begin{equation*}
  \rho,\sigma\in\S_0\cap\mathfrak{Q}_{X,\F,\tilde{\omega}}\quad \Rightarrow \quad \tau_+,\tau_-,\omega_*\in\S_0,
\end{equation*}
where $\tau_+$, $\tau_-$ and $\omega_*$ are the states defined in (\ref{tau-s}) via the representing measures $\mu_{\rho}$, $\mu_{\sigma}$ and $\mu_{*}$. In this case
one should correct the definitions of $C_f(\rho,\sigma\shs\vert\shs\varepsilon)$ and $C^+_f(\rho\shs\vert\shs\varepsilon)$
by replacing $\mathfrak{Q}_{X,\F,\shs\tilde{\omega}}$ in (\ref{C-f}) and (\ref{C-f+}) with $\mathfrak{Q}_{X,\F,\shs\tilde{\omega}}\cap\S_0$.
\end{remark}

\subsection{On advantage that new Lemma 1 gives}

The advantage of Lemma \ref{g-ob} in the previous subsection in comparison with Lemma 1 in Section 3.1 in \cite{LCB}
consists in replacing the term
$$D_f(\varepsilon)\doteq\displaystyle(1+\varepsilon)(a_f+b_f)\!\left(\frac{\varepsilon}{1+\varepsilon}\right)$$
in both claims with the strictly smaller term $$\,a_f(\varepsilon)+b_f(\varepsilon).$$

Since in all the applications of Lemma 1 in Section 3.1 in \cite{LCB} considered in Sections 3.2 and 4 in \cite{LCB} we deal with a function
$f$  satisfying inequalities (\ref{LAA-1}) and (\ref{LAA-2}) in which
$a_f$ and $b_f$ are functions proportional to the binary entropy $h_2$ defined in (\ref{be}), the use of the "advanced" Lemma \ref{g-ob}
in the previous subsection instead of Lemma 1 in Section 3.1 in \cite{LCB} in all the proofs leads to
the replacement
\begin{equation}\label{r}
g(\varepsilon)\doteq\displaystyle(1+\varepsilon)h_2\left(\frac{\varepsilon}{1+\varepsilon}\right)\quad \rightarrow \quad \tilde{h}_2(\varepsilon)\doteq\left\{\begin{array}{l}
        h_2(\varepsilon)\;\; \textrm{if}\;\;  \varepsilon\in[0,\frac{1}{2}]\\
        \ln2\quad \textrm{if}\;\;\;  \varepsilon\in[\frac{1}{2},1]
        \end{array}\right.
\end{equation}
in the right hand sides of all the  semicontinuity bounds obtained in \cite{LCB}.\footnote{We have to use $\tilde{h}_2$ instead of $h_2$ because the monotonicity of $g$ is exploited in the proofs of Theorems 1 and 2 in \cite{LCB}.}\medskip

So, we come to the following \medskip
\begin{proposition}\label{main}\emph{The inequalities in Theorems 1,2 and in Corollaries 1,2 in Section 3.2 in \cite{LCB} can be improved by applying the replacement (\ref{r}) in
their right hand sides.   }\medskip

\emph{The same improvement can be done in all the propositions, corollaries and examples presented in Section 4 in \cite{LCB}.}\medskip

\emph{More formally, the optimizing replacement (\ref{r}) can be done in  the right hand sides of the following inequalities from \cite{LCB}:
(31), (32), (37), (38), the inequality in Corollary 1A, (43), the inequality in Corollary 2A, (44), (45), (46), the inequality after (46), (47), (48), (52), (54), the inequality in Corollary 3, the inequality at the end of Section 4.1, (62), (63), (64), (65), (68), (69), (71), (72), (76), (77), (81), (82), (83), (84), (85), (87), the inequality before (91), (91), (92), (93).}

\end{proposition}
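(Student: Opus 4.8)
The plan is to argue that Lemma \ref{g-ob} is a drop-in replacement for Lemma 1 in Section 3.1 of \cite{LCB}: every semicontinuity bound in \cite{LCB} is obtained by feeding a concrete LAA function $f$ into that lemma and then performing identical post-processing, so substituting the improved lemma alters each final bound only through the single term produced by the lemma itself. Concretely, the old lemma outputs the quantity $D_f(\varepsilon)=(1+\varepsilon)(a_f+b_f)(\frac{\varepsilon}{1+\varepsilon})$, whereas Lemma \ref{g-ob} outputs the strictly smaller term $a_f(\varepsilon)+b_f(\varepsilon)$, and everything downstream in each proof is untouched. Thus it suffices to track how this one substitution acts on the right-hand sides of the listed inequalities.

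In each cited application the functions $a_f$ and $b_f$ are proportional to the binary entropy $h_2$ of (\ref{be}), so that $a_f+b_f=c\,h_2$ for a constant $c$ depending on the application. For such $f$ the old term becomes $D_f(\varepsilon)=c\,g(\varepsilon)$ with $g$ as in (\ref{r}), and the new term becomes $a_f(\varepsilon)+b_f(\varepsilon)=c\,h_2(\varepsilon)$. Hence the naive effect of the substitution is the replacement $g\rightarrow h_2$ in every bound.

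The delicate point---and the main obstacle---is monotonicity. As already noted in the footnote to (\ref{r}) and in Remark \ref{g-ob-r}, the proofs of Theorems 1 and 2 in \cite{LCB} exploit the fact that $g$ is nondecreasing on $[0,1]$, whereas $h_2$ is not monotone: it increases on $[0,\frac12]$ and decreases on $[\frac12,1]$. For this reason one must replace $g$ not by $h_2$ but by its nondecreasing cap $\tilde h_2$ of (\ref{r}), equal to $h_2$ on $[0,\frac12]$ and to the maximal value $\ln2=h_2(\frac12)$ on $[\frac12,1]$. Two facts then make the replacement legitimate. First, $\tilde h_2$ is nondecreasing on $[0,1]$ by construction, so every monotonicity argument of \cite{LCB} survives verbatim. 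Second, $\tilde h_2(\varepsilon)\le g(\varepsilon)$ for all $\varepsilon\in[0,1]$, so the substitution is a genuine improvement. To verify the latter I would use the closed form $g(\varepsilon)=(1+\varepsilon)\ln(1+\varepsilon)-\varepsilon\ln\varepsilon$: on $[0,\frac12]$ the difference $g(\varepsilon)-h_2(\varepsilon)=(1+\varepsilon)\ln(1+\varepsilon)+(1-\varepsilon)\ln(1-\varepsilon)$ vanishes at $\varepsilon=0$ and has derivative $\ln\frac{1+\varepsilon}{1-\varepsilon}\ge0$, hence is nonnegative; on $[\frac12,1]$ the function $g$ is strictly increasing, since $g'(\varepsilon)=\ln(1+\varepsilon^{-1})>0$, and $g(\frac12)>\ln2$, so $g(\varepsilon)\ge\ln2=\tilde h_2(\varepsilon)$ there.

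It then remains to go through the enumerated inequalities (31), (32), (37), (38), and the rest one by one, checking in each case that the term affected by the lemma is precisely the proportional multiple $c\,g(\varepsilon)$ and that no further step relies on a property of $g$ beyond its monotonicity; the improved inequality is obtained by the replacement $c\,g(\varepsilon)\rightarrow c\,\tilde h_2(\varepsilon)$. This last part is a bookkeeping exercise once the two facts of the previous paragraph are established, which is where the real content of the argument lies.
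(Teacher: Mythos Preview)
Your proposal is correct and follows essentially the same approach as the paper: the paper's argument for Proposition~\ref{main} is precisely the observation that Lemma~\ref{g-ob} replaces the term $D_f(\varepsilon)$ by the smaller $a_f(\varepsilon)+b_f(\varepsilon)$, that in all cited applications $a_f+b_f$ is proportional to $h_2$, and that the monotone cap $\tilde h_2$ is needed because the proofs of Theorems~1 and~2 in \cite{LCB} use the monotonicity of $g$. You actually supply more detail than the paper does, in particular the explicit verification that $\tilde h_2\le g$ on $[0,1]$, which the paper only asserts.
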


\medskip

\textbf{Note:} The term $\varepsilon g(E/\varepsilon)$ in the r.h.s. of (87),(92) and (93) can not been changed.
\medskip

\begin{example}\label{CB-1} According to Proposition \ref{main} the semicontinuity bound for the von Neumann entropy $S$
presented in Proposition 1 in \cite{LCB} can be improved as follows:\medskip

\emph{Let $H$ be a positive operator on $\H$  satisfying conditions (\ref{H-cond}) and (\ref{star}). If $\rho$ is a state in $\S(\H)$ such that $\Tr H\rho\leq E$ then
\begin{equation}\label{W-CB-2}
   S(\rho)-S(\sigma)\leq \varepsilon F_H((E-E_{H,\shs\varepsilon}(\rho))/\varepsilon)+\tilde{h}_2(\varepsilon)\leq \varepsilon F_H(E/\varepsilon)+\tilde{h}_2(\varepsilon)
\end{equation}
for any state $\sigma$ in $\S(\H)$ such that $\,\frac{1}{2}\|\rho-\sigma\|_1\leq \varepsilon$, where $E_{H,\shs\varepsilon}(\rho)\doteq \Tr H[\rho-\varepsilon I_{\H}]_+$ and the l.h.s. of (\ref{W-CB-2}) may be equal to $-\infty$.}\footnote{$\,[\rho-\varepsilon I_{\H}]_+$ is the positive part of the Hermitian operator $\,\rho-\varepsilon  I_{\H}$.}
\end{example}

\medskip\pagebreak

\begin{example}\label{CB-2} According to Proposition \ref{main} the semicontinuity bounds for the entanglement of formation (EoF)
presented in Proposition 4 in \cite{LCB} can be improved as follows:\footnote{$E_F^d$ and $E_F^c$ are discrete and continuous versions the EoF defined, respectively, by the expressions
\begin{equation}\label{E_F-def-d}
E_F^d(\omega)=\!\inf_{\sum_k\!p_k\omega_k=\omega}\sum_kp_kS([\omega_k]_A),\quad
\end{equation}
\begin{equation}\label{E_F-def-c}
E_F^c(\omega)=\!\inf_{\int\omega'\mu(d\omega')=\omega}\int\! S(\omega'_A)\mu(d\omega'),
\end{equation}
where the  infimum in (\ref{E_F-def-d}) is over all countable ensembles $\{p_k, \omega_k\}$ of pure states in $\S(\H_{AB})$ with the average state $\omega$ and the  infimum in (\ref{E_F-def-c}) is over all Borel probability measures on the set of pure states in $\S(\H_{AB})$ with the barycenter $\omega$.}\medskip

\emph{Assime that $AB$ is an infinite-dimensional bipartite quantum system.}\smallskip

\emph{\noindent \emph{A)} If $\rho$ is a state in $\S(\H_{AB})$ such that $\rank\rho_A$ is finite then
\begin{equation}\label{EF-CB-A}
   E^*_F(\rho)-E^*_F(\sigma)\leq \delta\ln(\rank\rho_A)+\tilde{h}_2(\delta),\quad E^*_F=E_F^d,E_F^c,
\end{equation}
for any state $\sigma$ in $\S(\H_{AB})$ such that $\,\frac{1}{2}\|\rho-\sigma\|_1\leq \varepsilon\leq1$,  where $\delta=\sqrt{\varepsilon(2-\varepsilon)}$ and  the l.h.s. of (\ref{EF-CB-A}) may be equal to $-\infty$.}\smallskip

\emph{\noindent \emph{B)} If $\rho$ is a state in $\S(\H_{AB})$ such that $\Tr H\rho_A\leq E$, where $H$ is a positive operator on $\H_A$  satisfying conditions (\ref{H-cond}) and (\ref{star}), then
\begin{equation}\label{EF-CB-B}
   E^*_F(\rho)-E^*_F(\sigma)\leq \delta F_H(E/\delta)+\tilde{h}_2(\delta),\quad E^*_F=E_F^d,E_F^c,
\end{equation}
for any state $\sigma$ in $\S(\H_{AB})$ such that $\,\frac{1}{2}\|\rho-\sigma\|_1\leq \varepsilon\leq1$, where $\,\delta=\sqrt{\varepsilon(2-\varepsilon)}$ and  the l.h.s. of  (\ref{EF-CB-B}) may be equal to $-\infty$.}
\end{example}\smallskip

\textbf{Note:} An improved version of  semicontinuity bound (\ref{EF-CB-A}) is presented in \cite{AMS}.
\smallskip

\begin{example}\label{CB-3} According to Proposition \ref{main} the semicontinuity bound for the Shannon conditional entropy (equivocation) $H(X_1\vert X_2)$
presented in \cite[inequality (87)]{LCB} can be improved as follows:

Assume that $(X_1,X_2)$ and $(Y_1,Y_2)$ are pairs of discrete random variables
and that the random variables $X_1$ and $Y_1$ take the values $0,1,2,..$. Then
\begin{equation}\label{eq-2-cb}
    H(X_1\vert X_2)_{\bar{p}}-H(Y_1\vert Y_2)_{\bar{q}}\leq \varepsilon g(E/\varepsilon)+\tilde{h}_2(\varepsilon)
\end{equation}
for any 2-variate probability  distributions $\bar{p}=\{p_{ij}\} $ and $\bar{q}=\{q_{ij}\}$ (describing the pairs $(X_1,X_2)$ and $(Y_1,Y_2)$) such that  $\mathbb{E}(X_1)\doteq\sum_{i,j=1}^{+\infty} (i-1)\shs p_{ij}\leq E$ and $\mathrm{TV}(\bar{p},\bar{q})\leq \varepsilon$.\smallskip

According to the remark after Proposition \ref{main} the term $g(E/\varepsilon)$ in (\ref{eq-2-cb}) is not changed. 

\smallskip

The example presented after  inequality (87) in \cite{LCB} shows that the semicontinuity bound (\ref{eq-2-cb}) is close-to-optimal.
\end{example}

\medskip

\end{document}